\documentclass[graybox]{svmult}

\usepackage{mathptmx}      
\usepackage{helvet}      
\usepackage{courier}       

\usepackage{makeidx}       
\usepackage{graphicx}    
                             
\usepackage{multicol}  
\usepackage[bottom]{footmisc}

\usepackage{url}

\usepackage{subeqnarray}
\usepackage{amsmath}
\usepackage{amsfonts}
\usepackage{bbm}

\makeindex

\begin{document}

\title*{Beyond adiabatic elimination: Effective Hamiltonians and singular perturbation}
\titlerunning{Beyond Adiabatic Elimination: Effective Hamiltonians and singular perturbation}

\author{Mikel Sanz, Enrique Solano, and \'I\~nigo L. Egusquiza}
\authorrunning{M. Sanz {\it et al.}}

\institute{Mikel Sanz \at Department of  Physical Chemistry, University of the Basque Country UPV/EHU, Apartado 644, E-48080 Bilbao, Spain, \email{mikel.sanz@ehu.eus} \and Enrique Solano \at Department of Physical Chemistry, University of the Basque Country UPV/EHU, Apartado 644, E-48080 Bilbao, Spain, \at IKERBASQUE, Basque Foundation for Science, Maria Diaz de Haro 3, 48013 Bilbao, Spain, \email{enr.solano@gmail.com} \and \'I\~nigo L. Egusquiza   \at Department of Theoretical Physics and History of Science, University of the Basque Country UPV/EHU, Apartado 644, E-48080 Bilbao, Spain,  \email{inigo.egusquiza@ehu.eus}}

\maketitle

\abstract{Adiabatic elimination is a standard tool in quantum optics, that produces an effective Hamiltonian for a relevant subspace of states, incorporating effects of its coupling to states with much higher unperturbed energy. It shares with techniques from other fields the emphasis on the existence of widely separated scales. Given this fact, the question arises whether it is feasible to improve on the adiabatic approximation, similarly to some of those other approaches. A number of authors have addressed the issue from the quantum optics/atomic physics perspective, and have run into the issue of non-hermiticity of the  effective Hamiltonian improved beyond the adiabatic approximation. Even though non-hermitian Hamiltonians are interesting in their own right, this poses conceptual and practical problems. 
\\
Here, we  first briefly survey methods present in the physics literature. Next we rewrite the problems addressed by the adiabatic elimination technique to make apparent the fact that they are singular perturbation problems from the point of view of dynamical systems. We apply the invariant manifold method for singular perturbation problems to this case, and show that this method produces the equation named after Bloch in nuclear physics. Given the wide separation of scales, it becomes intuitive that the Bloch equation admits iterative/perturbative solutions. We show, using a fixed point theorem, that indeed the iteration converges to a perturbative solution that produces in turn an exact Hamiltonian for the relevant subspace. We propose thus several sequences of effective Hamiltonians, starting with the adiabatic elimination and improving on it. We show the origin of the non-hermiticity, and that it is inessential given the isospectrality of the effective non-hermitian operator and a corresponding effective hermitian operator, which we build. We propose an application of the introduced techniques to periodic Hamiltonians.}

\section{Adiabatic Elimination} \label{sec:1}

Situations in which there is a wide separation among energy or time scales present in a physical system pervade all of Physics. It is therefore of recurrent interest to develop techniques to obtain approximate, effective descriptions of the low energy or slow sector of the system, since they are the most likely accessible to experimentation and control. In general, a na\"{\i}ve perturbation expansion will not provide us with the required effective description; depending on the scheme and approach, the obstacle to do so will turn up as secular terms in a time evolution, or as zero denominators in state expansions. Therefore, the approximation schemes valid for these situations will not be directly perturbative; they will be asymptotic, or resummation-based, or cumulant, or combinations thereof. Often, we have heuristic arguments for the construction of an effective description of the system, which do not naturally lead to improvements, at least not systematically.

A very common and useful approximation in quantum optics, normally based on a heuristic argument, is the so-called \emph{adiabatic elimination} technique \cite{walls2008quantum,Shore:2010uq,Yoo1985239}.  There are several ways of introducing this approximation. For this first presentation, consider the Schr\"odinger equation as a dynamical system, evolving with a Hamiltonian \( H \) as \( i\partial_t\psi=H\psi \) (we set here and henceforth \( \hbar=1 \)). Let state \( \psi \)Ê be partitioned into \( \alpha=P\psi \)Ê and \( \gamma=Q\psi \), with \( P \)  and \( Q =1-P\)   projectors, in such a manner that the eigenvalues of \( PHP \) are widely separated from those of \( QHQ \). Let it be the case in which the coupling between the \( P \) and \( Q \) subspaces is very small when compared to the eigenvalues of \( QHQ \).  To be more specific, let \( \tau \) be a characteristic scale, for instance, the norm of the restricted inverse \(\left(QHQ\right)^{-1}  \). Then, the Schr\"odinger equation may be rewritten as
\begin{subeqnarray}\label{eq:gensystem}
\I\partial_t \alpha&=& PHP \alpha+ PHQ \gamma\;,\\
\I\partial_t \gamma&=& QHP \alpha+ QHQ \gamma,
\end{subeqnarray}
where the last line of the system is to be multiplied by \( \tau \). Heuristically, we are asking that \( \tau QHP\ll 1 \), while \( \tau QHQ= O(1) \). This can be achieved if \( \gamma \) is small, and if we neglect its time evolution. In other words, we slave \( \gamma  \) to \(\alpha\) in the approximation \( QHQ \gamma=- QHP\alpha \), or, formally, \( \gamma= -(QHQ)^{-1}QHP\alpha \). By substituting this approximation for \(\gamma\) in the first component of the system, we obtain the effective evolution for the \emph{slow sector} \(\alpha\) as
\begin{equation}
\I\partial_t \alpha= \left(PHP- PHQ \frac{1}{QHQ} QHP\right)\alpha\;.
\end{equation}
By eliminating the fast component \(\gamma\), we obtain an effective Hamiltonian for the slow sector
\begin{equation}
H_{ \mathrm{adiabatic}}=PHP- PHQ \frac{1}{QHQ} QHP\;.
\end{equation}
This adiabatic elimination process has proven to be extremely useful in isolating the effective slow evolution for a number of systems . However, the aforementioned heuristic presentation does not lend itself readily to a systematic improvement. Thus, the question of how to transform this approximation into a controlled expansion is recurrently posed in quantum optics literature \cite{1751-8121-40-5-011,refId0}. In essence, both papers present an expansion of an integral kernel, either in the energy domain~\cite{1751-8121-40-5-011} or the time domain \cite{refId0}. 

As a matter of fact, this problem of finding a systematic expansion when there is a wide divergence in energy scales has also appeared in many other contexts: nuclear physics, condensed matter, and atomic and molecular physics, among others. In each case, there are expansion techniques, under different guises and names: Bloch equation~\cite{Eden1955,Ellis1977}, Schrieffer--Wolff expansion~\cite{SchriefferWolff1966}, or Born--Oppenheimer approximation~\cite{BornOppenheimer1927}. In this paper, we want to merge these techniques by presenting the system of equations as a singular perturbation problem, and by using the invariant manifold scheme for resummation of secular terms. For this purpose, we shall derive systematic expansions which improve on the adiabatic elimination technique and furnish us with an effective Hamiltonian for the low energy sector. Furthermore, we show the equivalence of this expansion and the Schrieffer--Wolff method. We apply this method to some simple examples in quantum optics, and suggest its future applicability for time-dependent periodic Hamiltonians.

\section{A singular perturbation problem and Bloch's equations}

In Physics, there are often problems depending on a small parameter which can be solved by applying perturbation theory. The singular perturbation theory approach has been developed for those cases for which no uniform regular expansion is possible, the so-called {\it singular problems}. Heuristically, one can identify a problem as being singular when it is qualitatively different at the zero order of the expansion, and immediately out of it. For example, the algebraic problem \( \epsilon x^2+ x-1=0 \) is singular for the small parameter \(\epsilon\), since it is second order if \(\epsilon\neq0\), but first order if \(\epsilon=0\). Similarly, singular perturbation problems in systems of ordinary differential equations \cite{Fenichel197953} appear frequently by reducing the order of the differential equation when the small parameter is set to be 0.
\begin{figure}[t] 
\centering\includegraphics[width=4.0in]{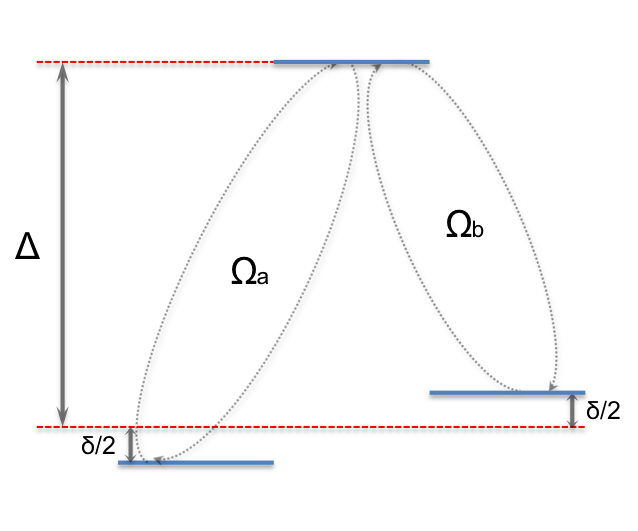}
\begin{center}\caption{The \(\Lambda\) system is a paradigmatic model in atomic physics and quantum optics, where a three-level atomic system may be coupled to classical or quantum fields.}\label{fig:lambda}\end{center}
\end{figure}

Let us now consider the \(\Lambda\) system depicted in Fig. \ref{fig:lambda}, which is ubiquitous in atomic physics. This is governed by the system of ordinary differential equations 
\begin{eqnarray}\label{eq:brionlambda}
\I\dot\alpha&=& - \frac{ \delta}{2} \alpha+ \frac{ \tilde{\Omega}^*_a}{2} \gamma\nonumber\,,\\
\I\dot \beta&=&  \frac{ \delta}{2} \beta+ \frac{ \tilde{\Omega}^*_b}{2} \gamma\,,\\
\I\dot \gamma&=& \frac{ \tilde{\Omega}_a}{2} \alpha +\frac{ \tilde{\Omega}_b}{2} \beta+ \Delta \gamma\,.\nonumber
\end{eqnarray}
We are interested in the regime in which \(\Delta\gg \delta, |\Omega_i|\), with \(\delta\) real. Under the change of variable \( t\to \delta t \) and the definitions \( \Omega_i=\tilde{\Omega}_i/\delta \)  and \( \epsilon= \delta/ \Delta \), the system transforms into
\begin{eqnarray}
\I\dot\alpha&=& - \frac{1}{2} \alpha + \frac{ \Omega^*_a}{2}\gamma\;,\nonumber \\
\I\dot \beta&=&  \frac{1}{2} \beta + \frac{ \Omega^*_b}{2}\gamma\;,\\
\I \epsilon \dot\gamma&=&  \gamma + \frac{ \epsilon}{2}\left( \Omega_a \alpha+ \Omega_b \beta\right)\,. \nonumber 
\end{eqnarray}
This is heuristically identifiable as a singular problem, since the last differential equation changes its character into an algebraic one as \(\epsilon \rightarrow 0\). Indeed, if one were to attempt a na\"{\i}ve perturbation expansion in the small parameter \(\epsilon\), one would obtain secular terms already at the first order, rendering the expansion invalid already for times $t$ of the order of \(  \Delta/ |\tilde{ \Omega}_i|^2 \) (in terms of the original variables).

The issue at hand is therefore how to identify a time-uniform scheme which provides us with approximations to the slowly varying variables. Among the numerous existing proposals, we shall concentrate here on the invariant manifold method~\cite{Fenichel197953}. This consists in constructing, among the submanifolds invariant under the flow, those that are perturbative in the small parameter. In the case of interest, as the system is linear, it is pertinent to only examine linear subspaces.

Let us write system (\ref{eq:gensystem}) in the matrix form
\begin{equation}\label{eq:newnotation}
\I\partial_t 
\begin{pmatrix}
\alpha \\ 
\gamma
\end{pmatrix}
= 
\begin{pmatrix}
\omega & \Omega^{\dag}\\ \Omega & \Delta
\end{pmatrix}
\begin{pmatrix}
\alpha\\ \gamma
\end{pmatrix}\,.
\end{equation}
For clarity, let us notice that if we were to write the \(\Lambda\) system in this more compact notation, we would be  using the isomorphism  \( \mathbb{C}^3=\mathbb{C}^2\oplus\mathbb{C} \). The three component vector \( \left( \alpha, \beta, \gamma\right)^T \) would be written  (again for the \(\Lambda\) system) as a two component object \( \left( \alpha, \gamma\right)^T \) , where the first component  is in turn a vector in \( \mathbb{C}^2 \). It follows that, in this case, \( \omega\in M_2(\mathbb{C}) \) (in fact, hermitian), \( \Delta\in M_1(\mathbb{C}) \) and hermitian, so in fact \(\Delta\in\mathbb{R}\) for the \(\Lambda\) system case. Finally, \(\Omega\) is a row two component vector for the \(\Lambda\) system. 

For the general case, consider that the full Hamiltonian of system  (\ref{eq:gensystem})  acts on the Hilbert space \( \mathcal{H} \). Then the projectors \( P \) and \( Q \) give the Hilbert subspaces \( P \mathcal{H} \) and \( Q \mathcal{H} \) respectively, with \( P \mathcal{H}\oplus Q \mathcal{H}= \mathcal{H} \). We have introduce new notation  in (\ref{eq:newnotation}) for the operators \( P_iHP_j \), where \( P_i \) stand for either \( P \) or \( Q \). Namely, \( PHP\to \omega\in \mathcal{B}(P \mathcal{H}) \), \( QHQ\to \Delta\in  \mathcal{B}(Q \mathcal{H}) \), while \( QHP\to \Omega\in \mathcal{B}(P \mathcal{H},Q \mathcal{H}) \). Now, \(\alpha\) stands for an element of \( P \mathcal{H} \), while \( \gamma\in Q \mathcal{H} \).

As we are looking for linearly invariant subspaces, let us define them by the relation \( \gamma= B \alpha \). Notice that the embedding operator \( B \) belongs to  \(  \mathcal{B}(P \mathcal{H},Q \mathcal{H}) \), as \(\Omega\). The invariance condition or embedding equation then reads
\begin{equation}\label{eq:blochexplicit}
\Omega+ \Delta B= B \omega+ B \Omega^{\dag} B\,.
\end{equation}
By itself, this equation, known in the literature as {\it Bloch's equation} \cite{0305-4470-36-20-201}, is not advantageous with respect to the direct analysis of the Hamiltonian. However, in the case of interest to us, the spectrum of \(\Delta\) is energetically very separated from the spectrum of \(\omega\), and we shall carry out a perturbative expansion or an iterative procedure to determine \( B \). Once \( B \) has been obtained, the evolution in the subspace is determined by \( \I\partial_t \alpha=\left( \omega+ \Omega^{\dag}B\right) \alpha \). This also provides us with approximate evolutions in the subspace, given an approximate solution to Bloch's equation. Notice the existence of an, in general, non-hermitian linear operator \( h_{ \mathrm{eff}}=\omega+ \Omega^{\dag}B \), which plays the role of an effective Hamiltonian.

\section{Expansion beyond adiabatic elimination}

Let us assume that \(\Delta\) has a bounded inverse, and that, for some definition of the operator norm, \( \epsilon= \| \Delta^{-1}\| \| \omega\|\ll 1 \) and \( \epsilon'=  \| \Delta^{-1}\| \| \Omega\|\ll 1\). We define the nonlinear transformation of operators
\begin{equation}
T(A)= - \Delta^{-1} \Omega + \Delta^{-1} A \omega+ \Delta^{-1} A \Omega^{\dag} A\,.
\end{equation}
The invariance condition (\ref{eq:blochexplicit}) may be now written as a fixed point equation, namely \( B=T(B) \). We shall now prove that the nonlinear transformation \( T \) has a fixed point which is the required solution to the problem at hand. Observe that, due to the definition of \( T \) and the properties of operator norms, we have
\begin{eqnarray}\label{eq:ineq}
\|T(A)\|&\leq& \|\Delta^{-1}\|\,\| \Omega\| + \|\Delta^{-1}\|\,\|  \omega\|\,\| A\| + \|\Delta^{-1} \|\,\| \Omega\|\,\| A\|^2 \nonumber\\
&\leq& \epsilon'\left(1+  \|A\|^2\right) + \epsilon \|A\|\,.
\end{eqnarray}
From this observation we obtain the following central proposition:
\begin{theorem}
Assume \( \epsilon, \epsilon'\geq0 \) and \(  \epsilon'\leq(1- \epsilon)/2 \). Let us define
\begin{equation}
r( \epsilon, \epsilon')= \frac{1- \epsilon}{2 \epsilon'}+ \sqrt{\left(\frac{1- \epsilon}{2 \epsilon'}\right)^2-1}\,.
\end{equation}
Then, the fixed point equation \( T(A)=A \) has at least one solution \( A_* \) such that 
\begin{equation}
\|A_*\|\leq r( \epsilon, \epsilon')\,.
\end{equation}
\end{theorem}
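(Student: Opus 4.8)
The plan is to read the equation $T(A)=A$ as a fixed‑point problem on the Banach space $X=\mathcal{B}(P\mathcal{H},Q\mathcal{H})$ to which $T$ maps, and to apply the Banach contraction principle on a suitably chosen closed ball $\bar{B}_r=\{A\in X:\|A\|\le r\}$. Two properties must be verified: that $T$ maps $\bar{B}_r$ into itself, and that $T$ restricted to $\bar{B}_r$ is a contraction. The self‑map property will be immediate from the norm estimate (\ref{eq:ineq}) already at hand, and the contraction constant comes out of a one‑line telescoping computation, so the whole argument hinges on choosing $r$ well.

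For the self‑map step, (\ref{eq:ineq}) gives $T(\bar{B}_r)\subseteq\bar{B}_r$ as soon as $\epsilon'(1+r^2)+\epsilon r\le r$, i.e. $q(r):=\epsilon' r^2-(1-\epsilon)r+\epsilon'\le 0$. The discriminant of $q$ is $(1-\epsilon)^2-4\epsilon'^2$, which is nonnegative precisely under the hypothesis $\epsilon'\le(1-\epsilon)/2$, so $q$ has real roots $r_{\pm}$; one checks directly that $r_{+}=r(\epsilon,\epsilon')$, and since the product of the roots of $q$ equals $1$ we have $r_{-}=1/r_{+}\le 1\le r_{+}$. Hence $q\le 0$ exactly on the interval $[r_{-},r_{+}]$, and $T$ is a self‑map of $\bar{B}_r$ for every such $r$; note in passing that $\|T(0)\|\le\epsilon'\le r_{-}$, so these balls are nonempty and contain the natural seed of the iteration.

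For the contraction step, the identity $T(A)-T(A')=\Delta^{-1}(A-A')\omega+\Delta^{-1}A\,\Omega^{\dag}(A-A')+\Delta^{-1}(A-A')\Omega^{\dag}A'$, together with submultiplicativity of the operator norm (and $\|\Omega^{\dag}\|=\|\Omega\|$), yields $\|T(A)-T(A')\|\le\bigl(\epsilon+\epsilon'(\|A\|+\|A'\|)\bigr)\,\|A-A'\|$, so on $\bar{B}_r$ the Lipschitz constant of $T$ is at most $\epsilon+2\epsilon' r$. This is $<1$ exactly when $r<\tfrac{1-\epsilon}{2\epsilon'}$, and the choice $r=r_{-}$ is admissible since $r_{-}=\tfrac{1-\epsilon}{2\epsilon'}-\sqrt{(\tfrac{1-\epsilon}{2\epsilon'})^2-1}\le\tfrac{1-\epsilon}{2\epsilon'}$, strictly so whenever $\epsilon'<(1-\epsilon)/2$. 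In that regime $\bar{B}_{r_{-}}$ is a complete metric space on which $T$ is a contraction self‑map, and Banach's theorem produces a fixed point $A_*$, unique in $\bar{B}_{r_{-}}$, with $\|A_*\|\le r_{-}\le r_{+}=r(\epsilon,\epsilon')$ --- which is the stated bound, in fact with the sharper radius $r_{-}$.

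The delicate point --- and the main obstacle --- is the degenerate boundary $\epsilon'=(1-\epsilon)/2$, where $r_{-}$ and $r_{+}$ collapse to $1$: the only ball for which (\ref{eq:ineq}) delivers invariance is then $\bar{B}_1$, on which the Lipschitz bound is exactly $\epsilon+2\epsilon'=1$, so the contraction argument just misses. I would handle this either by a limiting argument --- taking $\epsilon'_n\uparrow(1-\epsilon)/2$, whose Banach fixed points all lie in $\bar{B}_1$, and passing to the limit --- or, when $\mathcal{H}$ (hence $\bar{B}_{r_{+}}$) is finite‑dimensional, as in the quantum‑optical applications, by simply invoking Schauder's fixed‑point theorem for the continuous self‑map $T:\bar{B}_{r_{+}}\to\bar{B}_{r_{+}}$, which already gives existence with $\|A_*\|\le r_{+}$ exactly as stated (at the cost of uniqueness). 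It is worth emphasising that the physically meaningful, genuinely perturbative solution is the small one, with $\|A_*\|\le r_{-}\approx\epsilon'$ in the adiabatic regime $\epsilon,\epsilon'\ll 1$; the radius $r(\epsilon,\epsilon')$ appearing in the statement is the loose bound that persists uniformly up to the boundary.
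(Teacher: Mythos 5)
Your proof is correct in substance but takes a genuinely different, and in one respect stronger, route than the paper. The paper stops at the self-map property: it analyses $g(x)=\epsilon'(1+x^2)+\epsilon x$ exactly as you do, concludes that $T$ maps the closed ball of radius $r_+=r(\epsilon,\epsilon')$ into itself, and then invokes Schauder's fixed-point theorem to get existence (checking, as you also note, that the seed $-\Delta^{-1}\Omega$ has norm $\le\epsilon'$ and so lies in the ball). You instead work on the \emph{smaller} root $r_-=1/r_+$ of the same quadratic, add the Lipschitz estimate $\|T(A)-T(A')\|\le\bigl(\epsilon+\epsilon'(\|A\|+\|A'\|)\bigr)\|A-A'\|$, and apply Banach's contraction principle. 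This buys you something the paper explicitly concedes it does not have: the paper states that it has \emph{not} proven convergence of the iteration $B^{(k+1)}=T[B^{(k)}]$ and defers uniqueness to the Schrieffer--Wolff literature, whereas your argument delivers, for $\epsilon'<(1-\epsilon)/2$, both uniqueness in $\bar B_{r_-}$ and geometric convergence of exactly that iteration (since the seed satisfies $\|{-\Delta^{-1}\Omega}\|\le\epsilon'\le r_-$). Your closing remark that the physically relevant solution is the small one, $\|A_*\|\lesssim\epsilon'$, is also the right reading of the theorem.

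Two caveats. First, at the degenerate boundary $\epsilon'=(1-\epsilon)/2$ your limiting argument ``$\epsilon'_n\uparrow(1-\epsilon)/2$'' is not literally available, because $\epsilon'=\|\Delta^{-1}\|\,\|\Omega\|$ is fixed by the data rather than a free dial; you would have to rescale $\Omega\to\lambda\Omega$ with $\lambda\uparrow1$ and then extract a convergent subsequence of the fixed points $A_\lambda$ from $\bar B_1$, which requires compactness and hence finite dimension. Your fallback to Schauder on $\bar B_{r_+}$ is exactly what the paper does in all cases. Second, note that Schauder itself (for the paper as much as for you) needs either a compact convex set or a compact map: a closed bounded convex subset of an infinite-dimensional $\mathcal{B}(P\mathcal{H},Q\mathcal{H})$ is not compact, so the Schauder route is only airtight in the finite-dimensional setting the paper says it is interested in. Your contraction argument, by contrast, is dimension-independent in the strict regime $\epsilon'<(1-\epsilon)/2$, which is a further point in its favour.
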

\begin{proof}
By direct analysis of the function \( g(x)= \epsilon'(1+x^2)+ \epsilon x \), and using inequality (\ref{eq:ineq}) we conclude that, for every \( A \) such that
\begin{equation}
\|A\|\leq r( \epsilon, \epsilon')
\end{equation}
it holds that 
\begin{equation}
\|T(A)\|\leq r( \epsilon, \epsilon')\,.
\end{equation}
As \( T \) maps a bounded closed convex set of the corresponding Banach space of operators into itself, there exists at least one fixed point  \( A_* =T(A_*)\)  in the set, by Schauder's fixed point theorem.

Furthermore, choosing the operator \( -\Delta^{-1} \Omega \) as the initial point of the iteration, one can readily see that 
\begin{equation}
\| -\Delta^{-1} \Omega\|\leq  \|\Delta^{-1}\|\,\| \Omega\| = \epsilon'\leq r( \epsilon, \epsilon')
\end{equation}
under the stated conditions. Thus, we know that there is at least one fixed point in its vicinity. 
\end{proof}

To sum up, under those conditions, we are assured of the existence of a solution of Bloch's equation which is small in the sense that \( \|B\|\leq r( \epsilon, \epsilon') \). It is therefore natural to attempt either an iterative process or a perturbative expansion to compute approximations for that solution. 

More concretely, we define
\begin{eqnarray}\label{eq:brecurrence}
B^{(0)}&=& - \Delta^{-1} \Omega\,,\\
B^{(k+1)}&=& T\left[B^{(k)}\right]\nonumber\,.
\end{eqnarray}
The sequence of operators \( B^{(k)} \) is thus defined by iteration, and lies in the region of applicability of the proposition. Although we have not proven the convergence of this sequence, the mapping of problem (\ref{eq:blochexplicit}) to the Schrieffer--Wolff expansion presented in Sec. \ref{sec:SW} allows us to conclude the uniqueness of the fixed point in some circumstances, given the results presented in \cite{Bravyi20112793}.

Alternatively, we can define a perturbative expansion as
\begin{equation}\label{eq:adiaelimexp}
B= \sum_{k=1}^\infty B_{(k)}
\end{equation}
and insert it into the embedding or Bloch  equation  to obtain the recursive equation
\begin{equation}\label{eq:adiaelimexprecurr}
B_{(k+1)}= \Delta^{-1} B_{(k)} \omega + \Delta^{-1} \sum_{l=1}^{k-1}B_{(k-l)} \Omega^{\dag} B_{(l)}\,,
\end{equation}
with the initial conditions $B_{(1)}=- \Delta^{-1} \Omega$ and $B_{(2)}= - \Delta^{-2} \Omega \omega$. Notice that, formally,
\[B^{(k)}-\sum_{l=1}^{k+1}B_{(l)}=O\left( \Delta^{-(k+2)}\right)\,,\]
which relates the perturbative and the recurrence results to a given order. Explicit formulae relating the iteration approximations \( B^{(k)} \) and the perturbative contributions \( B_{(k)} \) are hardly illuminating beyond this relation to a given order.

\section{The effective Hamiltonian}

As pointed out above, the evolution in the subspace fixed by \( B \) is determined by \( \omega+ \Omega^{\dag}B \). In the first order, this provides us with a hermitian operator
\begin{equation}
h_{(1)}=h^{(1)}= \omega - \Omega^{\dag} \Delta^{-1} \Omega\,,
\end{equation}
either by iteration or perturbatively. This is, in fact, what we would obtain from direct adiabatic elimination, namely, by setting \( \partial_t\gamma \) to zero, solving \( \gamma \) as \( - \Delta^{-1} \Omega \alpha \), and replacing this for \(\alpha\) in the differential equation. Furthermore, this operator is hermitian by construction.

Hermiticity, however, is not maintained in higher recurrence or perturbation orders~\cite{Muga2004}. This has been a sticking point in the literature, and the source of some confusion. It is clear that, by construction, the linear generator of evolution for the \(\alpha\) part of the full \(\psi=\left( \alpha, \gamma\right)^T\) does not need to be hermitian, even if the total Hamiltonian is hermitian. In any case, its spectrum must be real. In order to prove it, let us consider an eigenvector of the total Hamiltonian belonging to the linear subspace determined by \( \gamma=B \alpha \). Then, the restriction to its \(\alpha\) part will be an eigenvector of the effective Hamiltonian with the same eigenvalue as that for the total Hamiltonian. Conversely, let us assume that \(\alpha_*\) is an eigenstate of the effective Hamiltonian. Then, \( \psi_*=\left( \alpha_*, B \alpha_*\right)^T\) is also an eigenstate of the total Hamiltonian with the same eigenvalue, and hence the latter is real. 

In the finite-dimensional case of interest, it follows that the full effective Hamiltonian must be similar to a hermitian Hamiltonian. Let us construct this similarity transformation assuming that we already have a solution \( B \) for Eq. (\ref{eq:blochexplicit}) at hand. As a first step, notice that the total conserved norm \( \langle\psi,\psi\rangle \) can be expressed as
\begin{eqnarray*}
\langle\psi,\psi\rangle&=&\langle \alpha, \alpha\rangle+ \langle \gamma, \gamma\rangle\\
&=& \langle \alpha,\left(1+B^{\dag}B\right) \alpha\rangle\,.
\end{eqnarray*}
One is immediately led to examine \(h_1= \left(1+B^{\dag}B\right) \left(\omega+ \Omega^{\dag}B\right) \). By using Eq. (\ref{eq:blochexplicit}) and its conjugate, one concludes that \( h_1 \) is indeed hermitian, if \( B \) is a solution to Bloch's equation. Its spectrum, however, is not the one corresponding to the time evolution in the low energy sector \( P \mathcal{H} \). Let us define
\begin{equation}\label{eq:sb}
S_B= \sqrt{1+ B^{\dag}B}\,,
\end{equation}
which is always possible, since \( 1+ B^{\dag}B \) is positive. Then, for any constant unitary \( V \) acting on \( P \mathcal{H} \), we obtain a hermitian Hamiltonian 
\begin{equation}
h_V=V S_B \left( \omega+ \Omega^{\dag} B\right) S_B^{-1}V^{\dag}\,.
\end{equation}
Hermiticity is easily proven by noticing that
\begin{eqnarray*}
h_V&=&V S_B \left( \omega+ \Omega^{\dag} B\right) S_B^{-1}V^{\dag}\\
&=&V S_B^{-1}h_1S_B^{-1}V^{\dag}.
\end{eqnarray*}
The unitary \( V \) can be subsumed in the choice of the square root defining \( S_B \) in (\ref{eq:sb}). In what follows, the omission of the subscript \( {}_V \)  is associated to the assumption of a choice for \( S_B \).

In point of fact, an exact solution for (\ref{eq:blochexplicit}) is as hard to come by as an exact diagonalisation of the initial Hamiltonian, so we have to use approximate methods. However, if we truncated \( B \) to some approximation, \( h_1 \) (and thus \( h_V \)) would no longer be automatically hermitian. This can be mended by introducing in
\begin{equation}\label{eq:hermitian}
h= \left(1+B^{\dag}B\right)^{-1/2}\left( \omega + \Omega^{\dag}B+ B^{\dag} \Omega+ B^{\dag}\Delta B\right)\left(1+B^{\dag}B\right)^{-1/2}\,
\end{equation}
an approximate solution \( B_a \) to Bloch's equation. This expression would be equivalent (up to unitary transformations \( V \)) to \( h_V \), if \( B_a \) were an exact solution of Bloch's equation (\ref{eq:blochexplicit}); on the other hand, it is explicitly hermitian for any \( B_a \), which provides us with a hermitian approximate Hamiltonian encoding the effective evolution.
In the case of a perturbative expansion, one can compute (\ref{eq:hermitian}) to second order, for example, giving
\begin{equation}\label{eq:blochhamthird}
h^{(2)}= \omega - \Omega^{\dag} \Delta^{-1} \Omega- \frac{1}{2}\left( \Omega^{\dag} \Delta^{-2} \Omega \omega + \omega \Omega^{\dag} \Delta^{-2} \Omega\right)\,.
\end{equation}

\section{Relation to Schrieffer--Wolff expansion}\label{sec:SW}

In 1966, J. R. Schrieffer and P. A. Wolff introduced a perturbatively-built canonical transformation, with an antihermitian generator \( S \), in order to eliminate small denominators in the perturbative expansion of the Anderson Hamiltonian \cite{SchriefferWolff1966}. However, this technique is nowadays a staple in condensed matter physics.

Let us define an antihermitian operator \( S \) by
\begin{equation}\label{eq:swoper}
\tanh(S)= \begin{pmatrix}
0&B^{\dag}\\ -B&0
\end{pmatrix}\,,
\end{equation}
where \( S \) is called Schrieffer--Wolff's operator (see, for instance, \cite{Bravyi20112793}). Then, Bloch's equation (\ref{eq:blochexplicit}) and its conjugate can be written together in the form
\begin{equation*}
\begin{pmatrix}
0& \Omega^{\dag}\\ \Omega&0
\end{pmatrix}- \tanh(S)\begin{pmatrix}
0& \Omega^{\dag}\\ \Omega&0
\end{pmatrix}\tanh(S)= \left[\begin{pmatrix}
\omega&0\\0& \Delta
\end{pmatrix},\tanh(S)\right]\,.
\end{equation*}
Under the assumption that \( \cosh(S) \), with antihermitian and block anti-diagonal \( S \)), exists and is invertible, this equation is equivalent to \(\exp(S)H\exp(-S)\) being block-diagonal.

Some further formal manipulations give us the additional identification
\begin{eqnarray}
e^{-S}&=& \left[1-\tanh(S)\right]\left[1-\tanh^2(S)\right]^{-1/2} \nonumber \\
&=&  \begin{pmatrix}
1&-B^{\dag} \\ B& 1
\end{pmatrix} \begin{pmatrix}
\left(1+ B^{\dag}B\right)^{-1/2}&0\\0&\left(1+ BB^{\dag}\right)^{-1/2}
\end{pmatrix}\,.
\end{eqnarray}
This expression, together with the equivalence between Bloch's equation and that  \(\exp(S)H\exp(-S)\) is block diagonal, give us the equivalence between the two methods. The hermitian effective Hamiltonian \( h_V \) is, in fact, the block obtained from the Schrieffer--Wolff method corresponding to the low energy sector, and an approximate \( S \) gives us an approximate \( B \), and viceversa.

In the Schrieffer--Wolff method, the approximations are built by expanding \( S \) in powers of the interaction, and imposing that, to that order, the transformed Hamiltonian must be block-diagonal. To establish the connection between both methods, let us denote \( S= \begin{pmatrix}
0&- \mathcal{S}^{\dag}\\ \mathcal{S}&0
\end{pmatrix} \). Then, by using our notation, the first order Schrieffer--Wolff condition reads 
\begin{equation}\label{eq:SWfirst}
\mathcal{S}_0 \omega- \Delta \mathcal{S}_0= \Omega\,.
\end{equation}
Notice that this is a Sylvester equation, which will have a unique solution if the spectra of \(\omega\) and \(\Delta\) are disjoint. It should be pointed out that, in the context we are interested in, it might be necessary to compute approximate solutions for Eq. (\ref{eq:SWfirst}). Alternatively, if a solution for (\ref{eq:SWfirst}) can be computed, it will be a \emph{resummation} of our perturbative or iterative expansion.

To the same expansion order, the effective Hamiltonian for the low energy sector reads
\begin{equation}\label{eq:SWhamfirst}
h_{SW}^{(1)}= \omega+ \frac{1}{2}\left( \mathcal{S}_0^{\dag} \Omega+ \Omega^{\dag} \mathcal{S}_0\right)\,.
\end{equation}

\section{The $\Lambda$-system example}

The \(\Lambda\)-system in quantum optics described in the system of Eqs. (\ref{eq:brionlambda}) is amenable to both the approach based on  Bloch's equation and Schrieffer--Wolff's approach if \( \Delta\gg \delta,\tilde{ \Omega}_i \). Direct application of expressions (\ref{eq:blochhamthird}) and (\ref{eq:SWhamfirst}) yields
\begin{eqnarray}
h_{SW}^{(1)}&=& - \frac{ \Delta}{2} \sigma^z- \frac{1}{ \Delta} \frac{1}{ 1-( \delta/2 \Delta)^2} \tilde{\Omega}^{\dag} \tilde{\Omega}\\
&&+ \frac{ \delta}{ 4 \Delta^2}\frac{1}{ 1-( \delta/2 \Delta)^2} \left( \sigma^z \tilde{\Omega}^{\dag} \tilde{\Omega} + \tilde{\Omega}^{\dag} \tilde{\Omega} \sigma^z\right) \, , \nonumber \\
h^{(2)}&=& - \frac{ \Delta}{2} \sigma^z- \frac{1}{ \Delta}  \tilde{\Omega}^{\dag} \tilde{\Omega}\\
&&+ \frac{ \delta}{ 4 \Delta^2} \left( \sigma^z \tilde{\Omega}^{\dag} \tilde{\Omega} + \tilde{\Omega}^{\dag} \tilde{\Omega} \sigma^z\right) \nonumber \, .
\end{eqnarray}
In this simple example one can already see some features of the  Schrieffer--Wolff expansion as compared to the Bloch expansion; namely that the coefficients are not purely perturbative, but involve a resummation of perturbative terms. Additionally, this resummation presents with a pole that is not seen in the first perturbation terms. One should not expect the location of the pole to this order to be exact, and, in fact, working out specific exactly solvable examples (such as \( \tilde{\Omega}\to(1,0) \)) it is easy to see that it is not located at \( \delta=2 \Delta \)¥

\begin{figure}[h]
\begin{center}
\includegraphics[width=4in]{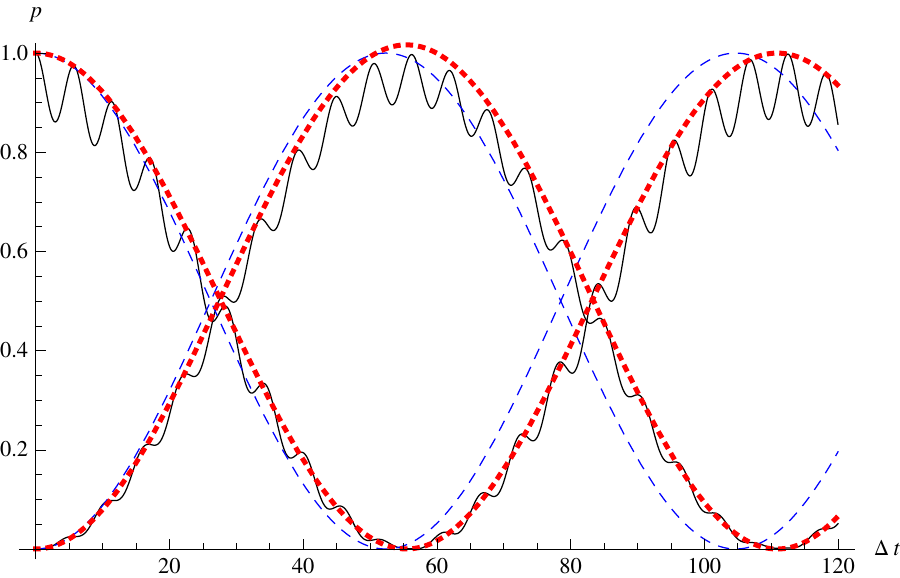}

\caption{Evolution of the population of the ground and excited states, with initial state \( (1,0,0)^T \), under a) the exact Hamiltonian (continuous black line), b) zeroth order effective Hamiltonian (dashed blue line) and c) fourth iteration of \( T \) (dotted red line). The parameters are \( \delta= -0.0175 \Delta\), \( \tilde{\Omega}_a=0.4 \Delta \), \( \tilde{\Omega}_b=0.3 \Delta \), for direct comparison with Ref. \cite{refId0}.}
\label{fig:lambdaex}
\end{center}
\end{figure}

Alternatively, the iteration expansion readily lends itself to numerical implementations. In Fig. \ref{fig:lambdaex} and \ref{fig:lambdaherm}, we show the presence of a secular shift for the adiabatic elimination approximation, the lack of normalisation with the non-hermitian Hamiltonian to some orders, and that the numerical solution of Bloch's equation matches perfectly to the real one already at fourth order of iteration, when ignoring large frequency oscillations.

\begin{figure}[h]
\begin{center}
\includegraphics[width=4in]{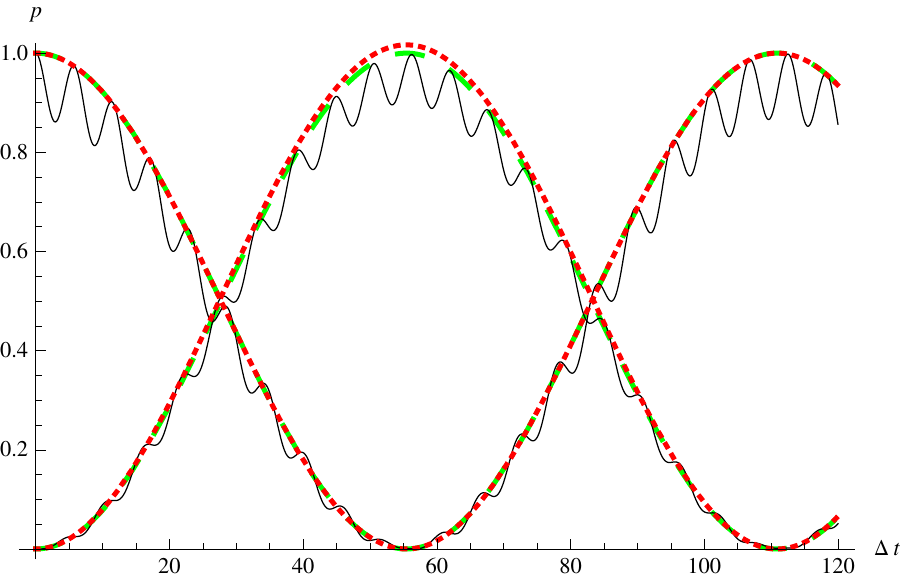}

\caption{Evolution of the population of the ground and excited states, with initial state \( (1,0,0)^T \), under a) the exact Hamiltonian (continuous black line), b) \( h_{ \mathrm{eff}}^{(4)} \) (dotted red line) and c) \( h_V^{(10)} \) with \( V=1 \) (dashed green line). The parameters are \( \delta= -0.0175 \Delta\), \( \tilde{\Omega}_a=0.4 \Delta \), \( \tilde{\Omega}_b=0.3 \Delta \), as before.}
\label{fig:lambdaherm}
\end{center}
\end{figure}

In particular, in Fig. \ref{fig:lambdaex}, we show the exact evolution of the populations of the lowest lying states in the \(\Lambda\) system. In the context of atomic physics, ``populations" translates into the square modulus of the coefficients of the state in an orthogonal  basis.

Thus, for a state \( \psi=\left( \alpha, \beta, \gamma\right)^T \) we are depicting \( | \alpha|^2 \) and \( | \beta|^2 \) as a function of normalised time \( \Delta t \), and these two quantities are compared with the evolution of the same populations with the first order of Bloch's approximation, i.e. adiabatic elimination approximation, and with the fourth iteration of the recurrence. One should observe the secular shift in the adiabatic elimination: the maxima of the relevant population under the evolution dictated by the adiabatic elimination are recurrently advanced with respect to the maxima given by the exact evolution. We also depict evolution under a non-hermitian effective Hamiltonian, and its effect is reflected in the fact that population maxima can be larger than one, as shown in the central maximum depicted in red. Since we are looking at the low energy effective evolution, the fast oscillations do not appear in the evolution under the effective Hamiltonian.

Analogously, in Fig. \ref{fig:lambdaherm}, one notices that the hermitian Hamiltonian to 10th order (obtained from computing the perturbative expansion of \( B \) to 10th order and substituting in (\ref{eq:hermitian})) perfectly matches the evolution of the exact populations, when one discards the rapid fluctuations that populate temporarily the highest energy state.

\section{Large frequency expansions for periodic Hamiltonians}

Let us consider a time-dependent periodic Hamiltonian with period \( 2\pi/ \omega \), so that \( H(t)=H(t+2\pi/\omega) \).  Let \( U(t) \) be the operator solution of \( i \partial_tU(t)=H(t)U(t) \), with \( U(0) = \mathbbm{1} \). Then, by Floquet's theorem, the eigenvalues of \( U(2\pi/\omega) \), when written as \( \exp\left(-i 2\pi \epsilon/\omega\right)  \), provide us with the so-called \emph{quasi-energies}  \(\epsilon\). It is well-known that the quasi-energies can be identified with the eigenvalues of the corresponding Floquet's Hamiltonian \cite{PhysRev.138.B979}. In order to understand it, let us assume that the original periodic Hamiltonian can be expanded as
\begin{equation}
H(t)=\sum_k H_k e^{-i k \omega t}
\end{equation}
with integer \( k \) and \( H_k^{\dag}=H_{-k} \).  Let us define the operators \( K \) and \( T \) acting on a different Hilbert space (the canonical example is \( L^2\left(S^1\right) \)), and such that a)  \(K  \) is hermitian and its spectrum consists of all integer numbers, i.e. if we denote the eigenbasis of \( K \) as \( |k\rangle \), then \( K|k\rangle=k|k\rangle \); b) \( T \) is unitary; c) \( \left[K,T\right]=T \). The Floquet's Hamiltonian associated to the original one is hence
\begin{equation}
H_F= \sum_k H_kT^k- \omega K\,.
\end{equation}
In order to avoid cluttering formulae unnecessarily we frequently omit the tensor product sign; by \( H_kT^k \) we actually denote \( H_k\otimes T^k \), which acts on \( \mathcal{H}\otimes L^2(S^1) \) if \( \mathcal{H} \) is the original Hilbert space on which \( H(t) \) acts. Similarly, \( K  \) stands for \( 1\otimes K \).

We shall now apply the formalism of adiabatic expansions to this Floquet's Hamiltonian to obtain approximate values for the quasi-energies.
The relevant projectors are \( P=\mathbbm{1}\otimes|0\rangle\langle0| \) and \( Q=\mathbbm{1}-P \). Therefore,
\begin{eqnarray}
PH_FP&=& H_0\,,\\
PH_FQ&=&\sum_k H_k PT^kQ= \sum_{k\neq0} H_k \otimes|0\rangle\langle -k|\,,\\
QH_FP&=& \sum_k H_k QT^kP= \sum_{k\neq0} H_k \otimes|k\rangle\langle 0|\,,\\
QH_FQ&=&-  \omega Q KQ+ \sum_k H_k QT^kQ\,.
\end{eqnarray}
The problem in the direct application of the formalism presented above lies in the computation of the inverse restricted to the \( Q \) subspace. However, in the limit of \( \omega \) much larger than \( H_k \) in norm, the inverse may be approximated by 
\begin{eqnarray}
Q(QH_FQ)^{-1}Q&=& - \frac{1}{\omega}\left(1- \frac{1}{ \omega}  K_Q^{-1}\sum_k H_k QT^kQ\right)^{-1} K_Q^{-1} \nonumber \\
&=& -\sum_{l=0}^{\infty} \frac{1}{(\omega)^{l+1}}\left[ K_Q^{-1}\sum_k H_k QT^kQ\right]^lK_Q^{-1}\,.
\end{eqnarray}
where \( K_Q^{-1} \) is the inverse of \( K \) restricted to the complement of \( |0\rangle \). By using now this approximation and truncating, we obtain a first approximation of the effective hermitian Hamiltonian in the constant sector as
\begin{equation}
H_0- \frac{1}{  \omega} \sum_{k\neq0} \frac{1}{k} H_{-k}H_k\,.
\end{equation}
As an example admitting an exact solution, let us consider
\begin{equation}\label{eq:exactham}
H(t)= g\left( \sigma^+ e^{i\omega t}+ \sigma^- e^{-i\omega t}\right)\,.
\end{equation}
The effective Hamiltonian reads
\begin{equation}\label{eq:examRWAres}
H_{ \mathrm{eff}}=  - \frac{g^2}{ \omega}\left(1-\frac{  g^2}{ \omega^2}\right) \sigma^z\,,
\end{equation}
up to \( O(g^6/\omega^5) \). It should be stressed at this point that the crucial information carried by the effective Hamiltonian is the set of eigenvalues. As stated, this example is exactly solvable, but it is also an interesting toy model to test the power of the techniques shown in this paper. To solve it, it is sensible to consider the evolution of a time-dependent Hamiltonian, which is, in physics language, the interaction picture Hamiltonian obtained from \( H_S=( \Delta+ \omega) \sigma^z/2+ g \sigma^x \) with respect to a free part given by \(H_{ \mathrm{free}}= \omega \sigma^z/2 \). For the sake of clarity, let us define precisely what the interaction picture is. Let \( U_S(t) \) be the unitary solution of the initial value problem \( i\partial_t U_S= H_S U_S \) with \( U_S(0)=1 \).  Similarly, let \( U_{ \mathrm{free}}(t) \) be the unitary solution of the initial value problem \( i\partial_t U_{ \mathrm{free}}=H_{ \mathrm{free}}U_{ \mathrm{free}} \) with \( U_{ \mathrm{free}}(0)=1 \). The interaction picture Hamiltonian with respect to the free part \( H_{ \mathrm{free}} \) is defined as
\begin{equation*}
H_I(t)= U_{ \mathrm{free}}(t)^{\dag} \left(H_S- H_{ \mathrm{free}}\right) U_{ \mathrm{free}}\,.
\end{equation*}
Now, let us consider the initial value problem \( i\partial_t U_I(t)=H_I(t)U_I(t) \), with \( U_I(0)=1 \) and \( U_I(t) \) unitary. Then one readily sees that the solution for this initial value problem can be written as
\begin{equation}\label{eq:interactprop}
U_I(t)= U_{ \mathrm{free}}(t)^{\dag}U_S(t)\,.
\end{equation}
Alternatively, if we are presented with a time dependent Hamiltonian, the corresponding evolution operator can be computed if the Hamiltonian is identified as the interaction picture Hamiltonian with respect to some free part.

For the specific \( H_S  \) and \( H_{ \mathrm{free}} \) above we have
\begin{eqnarray*}
H_I(t)&=& e^{i\omega t \sigma^z/2}\left( \frac{\Delta}{2} \sigma^z+ g \sigma^x\right)e^{i\omega t \sigma^z/2}\\
&=& \frac{\Delta}{2} \sigma^z+ g \left( \cos( \omega t) \sigma^x - \sin( \omega t) \sigma^y\right)\\
&=&  \frac{\Delta}{2} \sigma^z+ g \left( \sigma^+ e^{i \omega t}+ \sigma^- e^{-i \omega t}\right)\,.
\end{eqnarray*}

 Obviously, the time dependent Hamiltonian \( H(t) \) of (\ref{eq:exactham})  corresponds to this \( H_I(t) \) in the case \(\Delta=0\). Particularising to the example the solution (\ref{eq:interactprop}) we have
\begin{equation}
U_I(t)= \exp\left( \frac{i \omega t}{2} \sigma^z\right) \exp\left( -\frac{i(\omega+ \Delta) t}{2} \sigma^z- i g t \sigma^x\right) \,.
\end{equation}
Hence, by computing \( U_I(2\pi/\omega) \), its eigenvalues read \( (2\pi/\omega)\times\left(\omega/2\pm \sqrt{( \omega+ \Delta)^2/4+g^2}\right) \). Thus, the quasi-energies, which are defined modulo \( \omega \), can be fixed as
\begin{equation}
\mp\left[ \frac{ \omega}{2}- \sqrt{ \left( \frac{ \omega+ \Delta}{2}\right)^2+ g^2}\right]\,.
\end{equation}
If \( \omega \gg \Delta,g\), the quasi-energies may be expanded to obtain
\begin{equation*}
\pm\left( \frac{ \Delta}{2}+ \frac{g^2}{ \omega} - \frac{g^2 \Delta}{ \omega^2}+ \frac{g^2( \Delta^2-g^2)}{ \omega^3}+\cdots\right) \, ,
\end{equation*}
matching (\ref{eq:examRWAres}) in the case \(\Delta=0\).

One can also apply Schrieffer--Wolff's method in this case; no resummation is gained, however, since
 \( \mathcal{S}_0 \) is computed exactly as
 \begin{equation}
\mathcal{S}_0= - \frac{g}{ \omega}\left( \sigma^+\otimes|-1\rangle\langle0|- \sigma^- \otimes|1\rangle\langle0|\right)\, ,
 \end{equation}
giving \( h_{SW}^{(1)}= \frac{ g^2}{ \omega} \sigma^z \) for the resonant case \( \Delta=0  \). The case \(\Delta\neq0 \) can also be solved exactly for \( \mathcal{S}_0 \), resulting in 
 \begin{equation}
 \mathcal{S}_0=  \frac{2g}{ \Delta-2\omega}\left( \sigma^+\otimes|-1\rangle\langle0|- \sigma^- \otimes|1\rangle\langle0|\right)\,.
 \end{equation}
Therefore, in this case \(\Delta\neq 0\), the corresponding Hamiltonian is 
 \begin{equation}
 h_{SW}^{(1)}= \frac{ \Delta}{2} \sigma^z - \frac{2g^2}{ \Delta- 2 \omega} \sigma^z\,.
 \end{equation}
To summarise, we have shown that both Bloch's expansion and Schrieffer--Wolff's provide us with large frequency expansions for the quasi-energies in periodic Hamiltonians. As a further scope, it would be interesting to consider the application of these novel techniques to the quantum Rabi model beyond rotating-wave approximation~\cite{Braak2011,NoteToEditor}.

\section{Conclusions}

The well established adiabatic elimination procedure in quantum optics has produced some controversy in that scientific community on its meaning and on the feasibility of systematic improvements on that approximation. We have shown that, in fact, adiabatic elimination is the first term of a systematic expansion (be it perturbative or iterative) using Bloch's equation from nuclear optics. We have further shown the existence of relevant solutions. We next connected this approach to Schrieffer--Wolff's method, showing in which sense it can be said that Schrieffer--Wolff is a resummation. After the presented example, we have looked into the case of periodic Hamiltonians, employing the previous techniques on Floquet's Hamiltonian, to obtain high frequency expansions for quasi-energies. We expect these novel methods to be useful in current models of interest as is the case of the quantum Rabi model.

We acknowledge financial support from Basque Government Grants IT472-10 and IT559-10, UPV/EHU UFI 11/55, Spanish MINECO FIS2012-36673-C03-02 and FPA2009-10612, PROMISCE and SCALEQIT European projects.

 \bibliographystyle{spphys}

\end{document}